\documentclass{article}
\usepackage{ijcai27}
\usepackage{times}
\usepackage{soul}
\usepackage{url}
\usepackage[hidelinks]{hyperref}
\usepackage[utf8]{inputenc}
\usepackage[small]{caption}
\usepackage{graphicx}
\usepackage{amsmath}
\usepackage{amssymb}
\usepackage{amsthm}
\usepackage{booktabs}
\usepackage{array}
\usepackage{algorithm}
\usepackage{algorithmic}
\newtheorem{definition}{Definition}
\newtheorem{lemma}{Lemma}
\newtheorem{proposition}{Proposition}
\newtheorem{theorem}{Theorem}
\newtheorem{corollary}{Corollary}
\newtheorem{example}{Example}
\title{Choosing What to Observe: Higher-Order Obstructions in Horn Abduction}
\author{
    Yang Yukun
    \affiliations
    School of Management, Beijing Institute of Technology
    \emails
    yangyukun2005@163.com
}
\begin{document}
\maketitle
\begin{abstract}
An explanation may exist in every complete context yet be impossible to select from the information observed. We study this distinction in finite propositional Horn abduction with integrity constraints. Observations are sufficient when each report admits one hypothesis set that is consistent and entails the manifestation in every compatible context. We characterize sufficiency by minimal context families with no common explanation. Their variable spans form a canonical hypergraph whose transversals are exactly the sufficient observation sets. The characterization recovers classical discernibility for unique explanations and supports composition across independent components. Minimal obstructions have unbounded cardinality even for acyclic rules, while rectangular explanation families admit a tight bound given by their largest choice domain. With circuit-defined contexts, sufficiency is $\Pi_3^P$-complete and selecting at most $k$ observable variables is $\Sigma_4^P$-complete, already for unrestricted contexts and acyclic rules. The selection reduction makes observation choices enforce globally fixed existential commitments. Explicit contexts reduce both tasks to NP-complete problems; tabulated explanation families make sufficiency polynomial. These results connect explanatory choice, information requirements, and representation cost.

\end{abstract}
\section{Introduction}

Abduction supplies hypotheses that explain a manifestation consistently with background knowledge \cite{poole_logical_1988,poole_explanation_1989,kakas_abductive_1992}. When background facts depend on partially observable conditions, an explanation valid in one context may fail in another producing the same report. The reasoner must therefore choose hypotheses using only the information it observes.

Consider three operating contexts $00,01,10$ with mutually exclusive explanatory choices $\{a,b\}$, $\{b,c\}$, and $\{a,c\}$, respectively. Every pair shares a choice, but no choice works in all three. Reading either telemetry bit permits a valid commitment without identifying the complete context. Thus contextwise existence and pairwise compatibility both fall short of uniform explanatory choice, even when Horn forward chaining verifies each individual choice efficiently.

We ask which source variables must be observed so that every report permits a single hypothesis set valid in all compatible contexts. We characterize failure by minimal context collections with empty explanation intersection. Their variable spans yield a canonical observation contract: sufficient variable sets are exactly its hypergraph transversals. Obstructions can be arbitrarily large under acyclic Horn rules. Independent finite choice slots instead give a tight domain-size bound, recovering pairwise discernibility for Boolean slots. Contracts also compose by conjunction across independent abductive components.

The requirement fixes the hypotheses before hidden facts are known. Entailment and consistency are evaluated separately in each remaining context; the contexts need not hold simultaneously. A contextwise choice could exploit information unavailable in the report. Pairwise tests introduce a different error: they assume an intersection property that arbitrary explanation families lack. Our characterization identifies exactly which collections must be split to avoid both errors.

With circuit-defined contexts, robust verification, uniform explanation existence for one cell, and sufficiency are respectively coNP-, $\Sigma_2^P$-, and $\Pi_3^P$-complete. Budgeted observation selection is $\Sigma_4^P$-complete. Its lower bound uses paired telemetry variables to force one globally fixed assignment while permitting other hypotheses to depend on reports. Listed contexts reduce sufficiency and selection to NP-complete problems; tabulated explanation families make sufficiency polynomial but leave selection NP-complete. These distinctions identify the cost of information dependencies beyond local Horn inference.

\section{Background and Related Work}

Classical abduction distinguishes deriving a conclusion from supplying consistent assumptions that support it \cite{console_relationship_1991,kakas_abductive_1992}. Diagnosis uses conflicts and hitting sets \cite{reiter_theory_1987}; truth maintenance records supporting environments \cite{de_kleer_assumption-based_1986}, with diagnoses depending on component assumptions \cite{de_kleer_characterizing_1992}. Our obstructions consist of contexts, and their transversals select observations. Eiter and Gottlob~\shortcite{eiter_complexity_1995} classify explanation existence, relevance, and necessity; Eiter and Makino~\shortcite{eiter_computing_2007} study enumeration from Horn theories. Tractable fragments depend on language restrictions \cite{zanuttini_new_2003,creignou_complete_2006,nordh_what_2008} and semantics \cite{eiter_abduction_1997}. Here local verification is polynomial; complexity arises from uniform choice across contexts. Facets describe variability within explanation families \cite{schmidt_complexity_2025}, whereas we compare their intersections across contexts. Classifier explanations identify sufficient feature assignments \cite{ignatiev_abduction-based_2019}; our observations instead supply information for selecting hypotheses.

Learning abductive explanations from partial examples gives statistical guarantees \cite{juba_learning_2018}; we require validity in every supplied possible context. Observational cells follow standard epistemic semantics \cite{fagin_reasoning_2003}. Uniform decisions also occur in incomplete-information games \cite{reif_complexity_1984} and partially observable planning \cite{rintanen_complexity_2004}; our setting isolates one commitment after a fixed observation. Dependency-quantified formulas formalize restricted existential information \cite{balabanov_henkin_2014}. All hypothesis bits here share one observation set, yielding an ordinary three-block formula for sufficiency. Restricted-vocabulary conditions \cite{lin_strongest_2001} and forgetting \cite{lang_propositional_2003} offer related perspectives: validating a fixed hypothesis set requires universal quantification over hidden variables, followed by a choice of hypotheses for each report.

Rough sets partition objects by observed attributes \cite{pawlak_rough_1982}, and discernibility functions encode pairwise decision differences \cite{slowinski_discernibility_1992}. We recover this construction for unique explanations. Superset decision tables allow candidate labels and study instantiations and entropy-based reducts \cite{lesot_feature_2020}. Our tabulated feasibility condition is an instantiation constant within each observation cell (Proposition 3); higher-order obstructions and succinct Horn representations supply the additional questions studied here. Diagnosability \cite{sampath_diagnosability_1995} and sensor minimization \cite{cassez_sensor_2007} study observation requirements for faulty versus nonfaulty behavior. Multiple permissible explanations instead require compatibility beyond pairs.

The decision-table connection concerns feasibility of a label selection. It does not identify entropy-based reduct criteria with budgeted observation selection. With multiple candidate explanations, preserving an original decision and selecting a common permissible decision are different objectives. We use the latter objective to expose minimal empty-intersection witnesses and then study how a Horn theory represents the candidate families. This separates the combinatorial correspondence from the cost of constructing its input.

Hypergraph transversals \cite{berge_hypergraphs_1989} and monotone dualization \cite{fredman_complexity_1996} describe the resulting contracts. As in knowledge compilation \cite{darwiche_knowledge_2002}, an exact representation may be expensive to construct. The lower bounds use standard quantified problems in the polynomial hierarchy \cite{stockmeyer_polynomial-time_1976,wrathall_complete_1976,chandra_alternation_1981}.

\section{Observation-Aware Horn Abduction}

\subsection{Contexts and classical validity}

All vocabularies are finite and partitioned into source, hypothesis, and derived atoms. Each source variable $x\in X$ has distinct atoms $x^0,x^1\in I_X$; a context $c:X\rightarrow\{0,1\}$ supplies $F_c=\{x^{c(x)}:x\in X\}$. Superscripts name atoms. Hypotheses form $H$. Rules $v\leftarrow u_1,\ldots,u_r$ have derived heads or the distinct symbol $\bot$, and bodies over the three vocabularies; empty bodies are allowed. Heads $\bot$ are integrity constraints. Source and hypothesis atoms never occur as heads, and $\bot$ never occurs in bodies.

The least Horn closure $\operatorname{cl}_T(A)$ is polynomial-time computable \cite{dowling_linear-time_1984}. The manifestation $g$ is derived; conjunctive manifestations use a fresh atom requiring all conjuncts. Acyclicity refers to the dependency graph with edges from body atoms to heads.

\begin{definition}[Contextual abduction instance]
An instance is $\mathcal P=(X,H,T,g,C)$, where $C\subseteq\{0,1\}^X$ is the set of possible contexts. A hypothesis set $S\subseteq H$ is valid at $c\in C$ if
\end{definition}

\[
V(c,S)\quad\Longleftrightarrow\quad
g\in\operatorname{cl}_T(F_c\cup S)
\ \text{and}\ 
\bot\notin\operatorname{cl}_T(F_c\cup S).
\tag{1}
\]

Let $\mathcal E_c=\{S\subseteq H:V(c,S)\}$ and $\mathcal U=2^H$, with empty-index intersection $\mathcal U$. Unless stated otherwise, $C\ne\varnothing$. Contexts are listed or circuit-defined: $C=\{c:\Gamma(c)=1\}$. Explanations may be nonminimal; $C$ records source correlations.

\subsection{Uniformity and sufficiency}

An observation scheme $O\subseteq X$ induces $c\equiv_O d$ iff $c|_O=d|_O$, with partition $C/O$. Only realized reports require a response.

\begin{definition}[Uniform explanation policy]
A policy for $O$ is a function $\pi:\{c|_O:c\in C\}\rightarrow\mathcal U$. It is valid if $V(c,\pi(c|_O))$ for every $c\in C$. The scheme $O$ is sufficient if such a policy exists. For a nonempty cell $B\in C/O$, define its uniform explanation family by
\end{definition}

\[
\mathcal E(B)=\bigcap_{c\in B}\mathcal E_c.
\tag{2}
\]

\begin{lemma}[Cell criterion]
A scheme $O$ is sufficient exactly when $\mathcal E(B)\ne\varnothing$ for every $B\in C/O$.
\end{lemma}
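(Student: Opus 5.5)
The plan is to prove the two directions separately, using only the definition of a valid policy and of $\mathcal E(B)$ in (2). The key observation is that cells of $C/O$ correspond bijectively to realized reports. Each cell $B$ is the set of $c\in C$ with $c|_O=r$ for a unique report $r=r_B$ in the domain $\{c|_O:c\in C\}$ of $\pi$. Conversely, each realized report determines exactly one nonempty cell. I will state this correspondence first, since both directions reduce to it.

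For necessity, assume a valid policy $\pi$ exists and fix a cell $B$ with report $r_B$. Every $c\in B$ satisfies $c|_O=r_B$, so validity gives $V(c,\pi(r_B))$. Hence $\pi(r_B)\in\mathcal E_c$ for every $c\in B$. Therefore $\pi(r_B)\in\bigcap_{c\in B}\mathcal E_c=\mathcal E(B)$, and this intersection is nonempty. Because $B$ is nonempty, the intersection ranges over a nonempty index set, so the convention that the empty intersection equals $\mathcal U$ never arises.

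For sufficiency, assume $\mathcal E(B)\neq\varnothing$ for every $B\in C/O$. For each cell $B$ choose some $S_B\in\mathcal E(B)$; this is a finite choice, since $H$ and $C$ are finite. Define $\pi(r_B)=S_B$. This is well defined because each realized report belongs to exactly one cell, and it maps into $\mathcal U=2^H$. Now take any $c\in C$, and let $B$ be its cell, so that $c|_O=r_B$. Then $\pi(c|_O)=S_B\in\mathcal E(B)\subseteq\mathcal E_c$, which gives $V(c,\pi(c|_O))$. So $\pi$ is valid and $O$ is sufficient.

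There is no real obstacle here. The only point that needs care is the bookkeeping between realized reports and nonempty cells. This is what makes $\pi$ well defined on exactly its stated domain, and it also shows that no unrealized report needs a response.
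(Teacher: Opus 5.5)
Your proof is correct and follows the same route as the paper. A valid policy's output on each realized report lies in the cell intersection $\mathcal E(B)$, and conversely one chosen member of each $\mathcal E(B)$ defines a valid policy. You also make explicit the correspondence between realized reports and nonempty cells, which the paper's one-line proof leaves implicit.
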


\begin{proof}
A policy chooses one member of each cell intersection; conversely, those choices define a valid policy on the finitely many reports.
\end{proof}

\begin{proposition}[Refinement and boundaries]
Sufficiency is upward closed: if $O$ is sufficient and $O\subseteq O'$, then $O'$ is sufficient. The empty scheme is sufficient exactly when $\bigcap_{c\in C}\mathcal E_c\ne\varnothing$. Full observation $X$ is sufficient exactly when every $\mathcal E_c$ is nonempty. If $C=\varnothing$, every scheme is sufficient.
\end{proposition}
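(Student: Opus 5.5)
Every claim will follow from the Cell criterion (Lemma 1) together with the basic structure of the partition $C/O$. We handle the four assertions in turn.

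\emph{Upward closure.} Let $O\subseteq O'$. If $c\equiv_{O'}d$, then $c|_{O'}=d|_{O'}$, so restricting to $O$ gives $c|_O=d|_O$. Hence every cell $B'\in C/O'$ lies inside a unique cell $B\in C/O$. Intersecting over fewer contexts can only enlarge the family, so $\mathcal E(B')=\bigcap_{c\in B'}\mathcal E_c\supseteq\bigcap_{c\in B}\mathcal E_c=\mathcal E(B)$. Sufficiency of $O$ makes $\mathcal E(B)$ nonempty by Lemma 1, so $\mathcal E(B')$ is nonempty, and Lemma 1 gives sufficiency of $O'$. Equivalently, a valid policy $\pi$ for $O$ yields the policy $\pi'(c|_{O'})=\pi(c|_O)$. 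This is well defined because $c|_O$ is determined by $c|_{O'}$, and it is valid at each $c$ because $\pi$ is.

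\emph{Boundary schemes.} Assume $C\neq\varnothing$. For $O=\varnothing$, every context restricts to the empty assignment, so $C/\varnothing=\{C\}$ is a single cell. Lemma 1 then reduces sufficiency to $\bigcap_{c\in C}\mathcal E_c\neq\varnothing$. For $O=X$, $c|_X=c$, so the cells are the singletons $\{c\}$ with $\mathcal E(\{c\})=\mathcal E_c$. Lemma 1 then says $X$ is sufficient exactly when every $\mathcal E_c$ is nonempty.

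\emph{Empty context set.} If $C=\varnothing$, there are no realized reports, so the policy domain is empty. The empty function is therefore a policy, and it is vacuously valid. This case must be argued directly from the definition of a policy, because the paper's standing assumption $C\neq\varnothing$ excludes it from the setting of Lemma 1. Lemma 1 still holds vacuously here, since there are no cells. Note that the second clause of the proposition does not apply when $C=\varnothing$: the empty-index intersection is $\mathcal U\neq\varnothing$, which is consistent with sufficiency but concerns a different convention.

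No step is a genuine obstacle. The only point that needs care is keeping the $C=\varnothing$ case separate and handling it from the definition rather than through Lemma 1.
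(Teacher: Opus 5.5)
Your proposal is correct and is the paper's argument spelled out in detail: refinement splits each $O$-cell into $O'$-cells whose common explanation families can only grow, the empty and full schemes give one cell and singleton cells respectively, and $C=\varnothing$ leaves no reports requiring a response. One small correction to your side remark: when $C=\varnothing$ the second clause still applies and holds under the paper's convention, since $\bigcap_{c\in\varnothing}\mathcal E_c=\mathcal U\ne\varnothing$ and the empty scheme is indeed sufficient, and the third clause likewise holds vacuously.
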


\begin{proof}
Refinement shrinks cells and preserves their common explanations. Empty and full observation give, respectively, one cell and singleton cells; an empty context space has no reports.
\end{proof}

\begin{example}[Incident triage]
Let $X=\{p,q\}$, $H=\{a,b,c\}$, and $C=\{00,01,10\}$. Include $\bot\leftarrow u,v$ for every distinct pair $u,v\in H$. Include the following six explanatory rules:
\end{example}

\[
\begin{array}{ll}
g\leftarrow a,p^0,q^0,&g\leftarrow b,p^0,q^0,\\
g\leftarrow b,p^0,q^1,&g\leftarrow c,p^0,q^1,\\
g\leftarrow a,p^1,q^0,&g\leftarrow c,p^1,q^0.
\end{array}
\tag{3}
\]

This acyclic theory has families $\mathcal E_{00}=\{\{a\},\{b\}\}$, $\mathcal E_{01}=\{\{b\},\{c\}\}$, and $\mathcal E_{10}=\{\{a\},\{c\}\}$. Every pair intersects; the triple does not. Observing $p$ permits $\pi(0)=\{b\},\pi(1)=\{a\}$; observing $q$ permits $\pi(0)=\{a\},\pi(1)=\{b\}$. Sufficient schemes therefore need not be intersection-closed or have a least member.

\begin{example}[Local minimality loses uniform commitments]
Let $X=\{x\}$, $H=\{a,b\}$, $C=\{0,1\}$, and $T=\{g\leftarrow a,x^0;\ g\leftarrow b,x^1\}$, with no constraints. The locally minimal explanation families are $\{\{a\}\}$ and $\{\{b\}\}$, but $\{a,b\}$ is valid in both contexts. No observation is needed under Definition 2. Intersecting only locally minimal explanations would incorrectly require observing $x$.
\end{example}

Preferences can select an inclusion-minimal member after cellwise intersection. Such a member need not be locally minimal.

Uniformity fixes hypotheses, while derivations may vary: $g\leftarrow a,x^0$ and $g\leftarrow a,x^1$ validate $\{a\}$ in both contexts, although neither source-free derivation suffices. Requiring a proof from $T\cup\{a\}$ alone would reject this valid commitment. Conversely, accepting different hypothesis sets whenever each context has a proof would accept the empty observation scheme in Example 1. The common hypothesis set is the decision constrained by available information; its deductive support may use the actual context. Preferences therefore apply to the resulting uniform family, after the information restriction has been enforced.

\section{Explanatory Obstructions and Observation Contracts}

\subsection{From failed commitments to variable requirements}

\begin{definition}[Explanatory obstruction]
A nonempty set $D\subseteq C$ is an obstruction if $\bigcap_{c\in D}\mathcal E_c=\varnothing$. It is minimal if every nonempty proper subset has a nonempty intersection. Let $\mathcal B(\mathcal P)$ contain the minimal obstructions. For $D\subseteq C$, define
\end{definition}

\[
\operatorname{span}(D)=\{x\in X:\exists c,d\in D,\ c(x)\ne d(x)\}.
\tag{4}
\]

The canonical observation contract is the hypergraph

\[
\mathcal H(\mathcal P)=\min_{\subseteq}
\{\operatorname{span}(D):D\in\mathcal B(\mathcal P)\},
\tag{5}
\]

where $\min_{\subseteq}$ removes duplicate edges and proper supersets. An empty hypergraph has no edges; a hypergraph containing the empty edge has no transversal. A transversal is a variable set meeting every edge.

\begin{theorem}[Exact observation characterization]
For every contextual abduction instance and every $O\subseteq X$,
\end{theorem}

\[
O\text{ is sufficient}\quad\Longleftrightarrow\quad
\forall e\in\mathcal H(\mathcal P),\ O\cap e\ne\varnothing.
\tag{6}
\]

\begin{proof}[Proof sketch]
Suppose $O$ is sufficient. If it misses the span of a minimal obstruction $D$, all contexts of $D$ agree on $O$ and lie in one cell $B$. Any explanation common to $B$ would belong to every $\mathcal E_c$ for $c\in D$, contradicting obstructionhood. Thus $O$ meets every obstruction span and every retained edge.

Conversely, suppose $O$ meets all retained edges but some cell $B$ has empty intersection. Finiteness supplies an inclusion-minimal nonempty $D\subseteq B$ with empty intersection. It is an obstruction, and its contexts agree on $O$, so $O$ misses its span. Every obstruction span contains an inclusion-minimal span retained in (5); hence $O$ misses a contract edge, a contradiction. This also covers singleton obstructions and their empty spans.
\end{proof}

Each obstruction must be split by an observed coordinate; its contexts need not be fully distinguished.

\begin{figure*}[t]
\centering
\includegraphics[width=\textwidth]{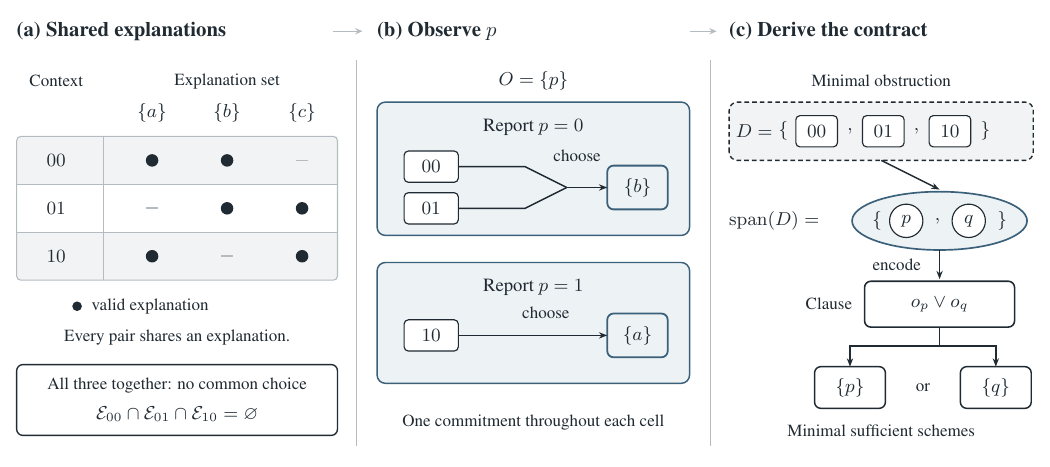}
\caption{From a higher-order obstruction to an observation contract. In Example~1, every pair of contexts shares an explanation, but the triple does not. Observing $p$ enables uniform commitments; the obstruction's span $\{p,q\}$ yields $o_p\lor o_q$.}
\label{fig:triage-observation}
\end{figure*}

\begin{corollary}[Optimization and mandatory observations]
Inclusion-minimal sufficient schemes are exactly the minimal transversals of $\mathcal H(\mathcal P)$. For nonnegative variable costs, minimum-cost sufficient schemes are minimum-cost transversals. If all schemes are restricted to $K\subseteq X$, one exists exactly when $K$ meets every contract edge. When some sufficient scheme exists, a variable is present in every sufficient scheme exactly when its singleton is a contract edge.
\end{corollary}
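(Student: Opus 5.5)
All four claims follow from Theorem 1 by reading ``sufficient'' as ``transversal of $\mathcal H(\mathcal P)$''. The plan is to take each part in turn.

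\emph{Minimal schemes and costs.} Theorem 1 says the family of sufficient schemes equals the family of transversals of $\mathcal H(\mathcal P)$, as families of subsets of $X$. Inclusion-minimal members of equal families coincide, so minimal sufficient schemes are exactly the minimal transversals. A cost function is defined on subsets of $X$ independently of this identification, so its minimizers over the two equal families also coincide. Nonnegativity is not needed for this equality. It matters only for noting that some minimum-cost transversal is minimal, which I would remark on but do not need.

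\emph{Restriction to $K$.} Some sufficient $O\subseteq K$ exists exactly when $K$ itself is sufficient. One direction takes $O=K$; the other uses upward closure from Proposition 1, since $O\subseteq K$. By Theorem 1, $K$ is sufficient exactly when it meets every contract edge.

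\emph{Mandatory variables.} Assume some sufficient scheme exists, so $X$ is sufficient by Proposition 1. By Theorem 1, $X$ meets every edge, hence $\varnothing\notin\mathcal H(\mathcal P)$. If $\{x\}$ is an edge, every transversal meets it, so every sufficient scheme contains $x$. Conversely, suppose $\{x\}$ is not an edge; I will exhibit the transversal $O=X\setminus\{x\}$. Take any edge $e$. Since $e\ne\varnothing$ and $e\ne\{x\}$, we get $e\not\subseteq\{x\}$, so $e$ contains some $y\ne x$, and $y\in O$. Hence $O$ is sufficient and omits $x$.

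The only delicate step is ruling out the empty edge, and this is exactly where the existence hypothesis is used. Without it, no scheme is sufficient and the characterization fails vacuously. Everything else is direct substitution into Theorem 1 and Proposition 1.
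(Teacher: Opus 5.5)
Your proposal is correct and follows the paper's proof. The first three claims are read off from Theorem~1; your use of upward closure for the $K$-restriction is the same point, since the family of transversals is upward closed. The mandatory-variable claim uses the same witness $X\setminus\{x\}$, and your explicit check that the existence hypothesis rules out the empty edge fills in the step the paper's ``every edge contains another variable'' leaves implicit.
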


\begin{proof}
The first three claims follow from (6). A singleton edge forces its variable; otherwise every edge contains another variable, so $X\setminus\{x\}$ is sufficient.
\end{proof}

In Example 1 (Figure~\ref{fig:triage-observation}), the only minimal obstruction is $\{00,01,10\}$, giving contract $o_p\lor o_q$, where $o_x$ means observing $x$. An individually infeasible context yields an empty span, making every scheme insufficient.

\subsection{Canonicality and classical correspondence}

\begin{proposition}[Observation equivalence]
For instances $\mathcal P,\mathcal P'$ on the same source variables, the same observation schemes are sufficient exactly when $\mathcal H(\mathcal P)=\mathcal H(\mathcal P')$.
\end{proposition}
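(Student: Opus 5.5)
The plan is to reduce both directions to Theorem 1, which shows that the family of sufficient schemes of an instance is exactly the set of transversals of its contract. Write $\operatorname{Tr}(\mathcal H)$ for the set of all subsets of $X$ meeting every edge of $\mathcal H$.

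The direction ``$\Leftarrow$'' is immediate. If $\mathcal H(\mathcal P)=\mathcal H(\mathcal P')$, then by (6) both instances have sufficient-scheme family $\operatorname{Tr}(\mathcal H(\mathcal P))$, so they agree.

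For ``$\Rightarrow$'', assume the two sufficient-scheme families coincide, so $\operatorname{Tr}(\mathcal H(\mathcal P))=\operatorname{Tr}(\mathcal H(\mathcal P'))$. I would then show that a hypergraph $\mathcal H$ on $X$ is determined by $\operatorname{Tr}(\mathcal H)$, provided $\mathcal H$ is a clutter (no edge properly contains another, no duplicates). The $\min_\subseteq$ in (5) guarantees this property. The standard argument recovers the edges of $\mathcal H$ as the inclusion-minimal sets $e\subseteq X$ such that $X\setminus e$ is not a transversal.

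To justify this, first note that a set $A$ fails to be a transversal iff $A$ misses some edge, i.e.\ iff some edge lies inside $X\setminus A$. Hence $X\setminus e\notin\operatorname{Tr}(\mathcal H)$ iff $e$ contains some edge of $\mathcal H$. The sets $e$ with this property form the up-closure of $\mathcal H$. Because $\mathcal H$ is a clutter, the minimal elements of this up-closure are exactly the edges of $\mathcal H$. Thus $\mathcal H=\min_\subseteq\{e: X\setminus e\notin\operatorname{Tr}(\mathcal H)\}$, and equal transversal families force equal contracts.

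The boundary cases need a check.
\begin{itemize}
\item If $\varnothing$ is an edge, no scheme is sufficient. The recovery yields $\min\{\text{all }e\}=\{\varnothing\}$, which is correct, since minimality in (5) then leaves only the empty edge.
\item If $\mathcal H$ is empty, every scheme is sufficient. The recovery yields no edges, which is also correct.
\item The case $C=\varnothing$ gives no obstructions and hence the empty contract, consistent with Proposition 1.
\end{itemize}

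The only real obstacle is confirming that (5) really produces a clutter, including when the empty edge coexists with others. Because $\min_\subseteq$ discards proper supersets, the empty edge eliminates all other edges, so the argument goes through. The proof therefore amounts to Theorem 1 plus the uniqueness of a clutter given its transversal family, i.e.\ the blocker/up-closure correspondence.
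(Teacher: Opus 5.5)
Your proposal is correct and follows the paper's route: both directions rest on Theorem 1, and recovering edges as the inclusion-minimal $e$ with $X\setminus e$ insufficient is the complement form of the paper's observation that the maximal insufficient schemes are exactly the complements of the irredundant edges. Your handling of the edgeless contract, the contract $\{\varnothing\}$, and the clutter property enforced by $\min_{\subseteq}$ matches the paper's boundary cases.
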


\begin{proof}[Proof sketch]
By Theorem 1, equal contracts give equal sufficient schemes. Conversely, complements of irredundant edges are exactly the maximal insufficient schemes: enlarging such a complement would require an edge properly contained in the original edge. Taking complements recovers the contract, including the empty and empty-edge cases.

More explicitly, $X\setminus e$ misses edge $e$. Any strict superset missing an edge $f$ would imply $f\subsetneq e$, contrary to irredundancy. Every insufficient scheme misses some edge and is contained in its complement, so no other maximal insufficient schemes exist. An edgeless contract has no insufficient schemes; the contract $\{\varnothing\}$ has unique maximal insufficient scheme $X$.
\end{proof}

Equal contracts preserve observation feasibility; policies may still use different hypotheses.

\begin{proposition}[Decision-table correspondence]
If every $\mathcal E_c$ is a singleton, minimal obstructions are exactly pairs with different explanations, and (5) is the ordinary discernibility hypergraph. More generally, in the tabulated setting, $O$ is sufficient exactly when the candidate-label table admits an instantiation constant on each $O$-cell.
\end{proposition}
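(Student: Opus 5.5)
The singleton case reduces to computing intersections of singleton families, so I will treat it first and then state the tabulated claim as a direct restatement of the cell criterion (Lemma 1).

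\emph{Singleton case.} Suppose every $\mathcal E_c=\{S_c\}$. Then for a nonempty $D\subseteq C$, the intersection $\bigcap_{c\in D}\mathcal E_c$ is either $\{S\}$, when all $S_c$ with $c\in D$ equal a common $S$, or $\varnothing$. So $D$ is an obstruction exactly when it contains two contexts with different explanations.

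A singleton $\{c\}$ has intersection $\{S_c\}\ne\varnothing$, so it is never an obstruction. A pair $\{c,d\}$ with $S_c\ne S_d$ is an obstruction, and it is minimal because its proper nonempty subsets are singletons. Conversely, any obstruction $D$ contains such a pair, which is itself an obstruction. Minimality of $D$ then forces $D$ to equal that pair.

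Hence $\mathcal B(\mathcal P)$ consists exactly of the pairs with distinct explanations. The span of such a pair is the set of variables on which the two contexts differ, i.e.\ its discernibility entry. Applying $\min_{\subseteq}$ in (5) gives the reduced discernibility hypergraph, whose transversals are the reducts in the usual sense. I would note in passing that full observation is automatically sufficient here, since no span is empty.

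\emph{Tabulated case.} I read ``the tabulated setting'' as each $c\in C$ being listed with its candidate label set $\mathcal E_c$. An instantiation assigns to each $c$ a label $\lambda(c)\in\mathcal E_c$, and it is constant on $O$-cells when $c\equiv_O d$ implies $\lambda(c)=\lambda(d)$.

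Suppose such a $\lambda$ exists. For each cell $B$, the common value of $\lambda$ on $B$ lies in every $\mathcal E_c$ with $c\in B$, so it lies in $\mathcal E(B)$. Lemma 1 then gives sufficiency.

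Conversely, given a valid policy $\pi$, set $\lambda(c)=\pi(c|_O)$. This is constant on cells, and validity gives $\lambda(c)\in\mathcal E_c$.

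The only delicate point is to fix the definition of ``instantiation'' so that it agrees with the superset-decision-table literature, namely a choice of one candidate label per object. Once that is fixed, both directions are immediate.
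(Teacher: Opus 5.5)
Your proposal is correct and follows essentially the same route as the paper: singleton families intersect exactly when their labels agree, so the minimal obstructions are the distinct-label pairs, and a cellwise constant instantiation is precisely the per-cell choice in Lemma 1 — you simply spell out the minimality and converse steps that the paper leaves implicit. One minor wording point: the reducts are the \emph{minimal} transversals of the reduced discernibility hypergraph, not all of its transversals.
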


\begin{proof}
Singleton families fail to intersect exactly when two labels differ. In general, a cellwise constant instantiation is precisely the choice in Lemma 1.
\end{proof}

\subsection{Obstruction size and expressive limits}

\begin{lemma}[Finite witness bound]
If $D$ is a minimal obstruction with $|D|\ge2$, there are distinct hypothesis sets $S_c\in\mathcal U$, one for each $c\in D$, such that $S_c\in\mathcal E_d$ for every $d\in D\setminus\{c\}$ but $S_c\notin\mathcal E_c$. Consequently,
\end{lemma}

\[
|D|\le |\mathcal U|,
\qquad
|D|\le 1+\min_{c\in D}|\mathcal E_c|.
\tag{7}
\]

\begin{proof}
Choose $S_c\in\bigcap_{d\in D\setminus\{c\}}\mathcal E_d$ by minimality. It is excluded by $\mathcal E_c$. Two such witnesses cannot coincide, and all except $S_c$ belong to $\mathcal E_c$, proving both bounds.

Indeed, if $S_c=S_d$ for distinct $c,d$, the two witness conditions place that same set in every family indexed by $D$, contradicting obstructionhood. Fixing $c$ then gives $|D|-1$ distinct members of $\mathcal E_c$.
\end{proof}

\begin{proposition}[Unbounded higher-order obstructions]
For every integer $r\ge2$, an acyclic Horn instance has $r$ contexts whose full collection is a minimal obstruction and whose proper collections all admit a common explanation.
\end{proposition}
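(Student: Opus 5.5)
We generalize Example 1 from three contexts to $r$ contexts, using one hypothesis per context and letting context $c_i$ reject exactly the hypothesis $h_i$.

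Take $H=\{h_1,\ldots,h_r\}$, choose $m=\lceil\log_2 r\rceil$ source variables, and let $C=\{c_1,\ldots,c_r\}$ be $r$ distinct assignments in $\{0,1\}^X$. For each context $c_i$ write $F_{c_i}$ for its $m$ literal atoms. The theory $T$ contains the pairwise constraints $\bot\leftarrow h_j,h_k$ for all $j\ne k$. For every pair $i\ne j$ it also contains the explanatory rule
\[
g\leftarrow h_j,\ x^{c_i(x)}\ (x\in X).
\]
Every rule has head $g$ or $\bot$ and a body of source and hypothesis atoms only. The dependency graph therefore has edges only into $g$ and $\bot$, so $T$ is acyclic. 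If one prefers to avoid $r=1$ degeneracies, one could instead use $r$ source variables with $c_i$ the $i$-th unit vector; the argument below is unchanged. The instance has polynomial size.

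The first step computes each $\mathcal E_{c_i}$. A set $S$ is consistent only if $|S|\le1$, because of the pairwise constraints. The empty set never derives $g$, since every rule for $g$ needs a hypothesis. A singleton $\{h_j\}$ derives $g$ at $c_i$ exactly when some rule $g\leftarrow h_j,F_{c_i}$ is present. The body of that rule must equal $F_{c_i}$ rather than merely be contained in it. Since the contexts are distinct complete assignments, the body $F_{c_k}$ is contained in $F_{c_i}$ only when $k=i$. Hence
\[
\mathcal E_{c_i}=\{\{h_j\}:j\ne i\}.
\]

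The second step checks the two defining properties. For a proper subset $D'\subsetneq C$, pick an index $k$ with $c_k\notin D'$. Then $\{h_k\}$ lies in every $\mathcal E_c$ with $c\in D'$, so $D'$ has a common explanation. For the full collection, any common member would be some $\{h_j\}$, but $\{h_j\}\notin\mathcal E_{c_j}$, so the intersection is empty. Thus $C$ is a minimal obstruction of cardinality $r$. This matches the tightness pattern of Lemma 2: $|D|=1+|\mathcal E_c|$ for every $c$ in $D$.

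The main obstacle is the first step. One must make sure that no unintended explanations appear, either through nonminimal hypothesis sets or through rule bodies that fire in the wrong context. The pairwise integrity constraints exclude the nonminimal sets. Requiring the full literal set of a context in each rule body excludes cross-context firing. I would verify both points explicitly.
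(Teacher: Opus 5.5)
Your proposal is correct and uses the paper's construction: hypotheses $h_1,\ldots,h_r$ with pairwise exclusivity constraints and rules $g\leftarrow h_j,F_{c_i}$ for $i\ne j$, giving $\mathcal E_{c_i}=\{\{h_j\}:j\ne i\}$, so every proper subcollection shares the hypothesis of an omitted context while the full collection shares none. Your explicit check that a guard $F_{c_k}$ fires at $c_i$ only when $k=i$ fills in a detail the paper's sketch leaves implicit.
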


\begin{proof}[Proof sketch]
Use hypotheses $h_1,\ldots,h_r$, pairwise exclusivity constraints, and distinct context codes $c_1,\ldots,c_r$. Add $g\leftarrow h_j,F_{c_i}$ for $i\ne j$, where $F_{c_i}$ denotes its conjunction. Then $\mathcal E_{c_i}=\{\{h_j\}:j\ne i\}$. Every proper collection shares an omitted context's hypothesis; the full collection shares none. Heads are only $g,\bot$, ensuring acyclicity.
\end{proof}

The construction attains (7)'s second bound. Fresh conjunction chains also permit binary bodies, so neither acyclicity nor bounded rule width bounds obstruction size.

\begin{theorem}[Realization of observation contracts]
Every hypergraph $\mathcal G$ on $X$ with no redundant edges is the observation contract of an acyclic Horn instance. If $\mathcal G$ has no empty edge, two hypothesis atoms suffice, and at most $1+|\mathcal G|$ contexts are required.
\end{theorem}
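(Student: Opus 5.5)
The plan is to realize $\mathcal G$ directly with one base context plus one context per edge, and then read off the contract using the definitions of obstruction and span. I first treat the degenerate cases. If $\mathcal G=\{\varnothing\}$, I take $H=\varnothing$, $T=\varnothing$ and $C=\{c_0\}$ for any single context $c_0$. Then $\mathcal E_{c_0}=\varnothing$, so $\{c_0\}$ is the only minimal obstruction and its span is empty, giving contract $\{\varnothing\}$. An irredundant hypergraph containing $\varnothing$ can contain no other edge, so this case is complete. If $\mathcal G$ is edgeless, I take $C=\{c_0\}$ and the rule $g\leftarrow$ with empty body. Then $\varnothing\in\mathcal E_{c_0}$, there are no obstructions, and the contract is empty.

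For the main case, let $\mathcal G=\{e_1,\ldots,e_m\}$ with $m\ge1$ and every $e_i\ne\varnothing$. Take $H=\{a,b\}$, the all-zero context $c_0$, and for each $i$ the context $c_i$ that is $1$ exactly on $e_i$. Set $C=\{c_0,c_1,\ldots,c_m\}$. These contexts are pairwise distinct: $c_i\ne c_0$ because $e_i\ne\varnothing$, and $c_i\ne c_j$ because the edges are distinct. This gives at most $1+|\mathcal G|$ contexts. The theory $T$ consists of $\bot\leftarrow a,b$, the rule $g\leftarrow a,F_{c_0}$, and the rules $g\leftarrow b,F_{c_i}$ for $1\le i\le m$, where $F_c$ is written out as a conjunction of source atoms. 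The only heads are $g$ and $\bot$, so the theory is acyclic.

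Next I compute the explanation families. Since distinct contexts disagree on some variable, $F_c\supseteq F_d$ holds only when $c=d$. Hence in $c_0$ only rules containing $a$ can fire, and in $c_i$ only rules containing $b$ can fire. The constraint rules out $\{a,b\}$. Therefore $\mathcal E_{c_0}=\{\{a\}\}$ and $\mathcal E_{c_i}=\{\{b\}\}$ for $i\ge1$. Every singleton has a nonempty family, and every subset of $\{c_1,\ldots,c_m\}$ shares $\{b\}$. So a nonempty $D\subseteq C$ is an obstruction exactly when it contains $c_0$ together with some $c_i$. The minimal obstructions are therefore exactly the pairs $\{c_0,c_i\}$, and $\operatorname{span}(\{c_0,c_i\})=e_i$. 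Applying $\min_\subseteq$ in (5) to $\{e_1,\ldots,e_m\}$ returns $\mathcal G$, because no edge is redundant.

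The step that needs the most care is the claim that each rule fires only in its own context. This is exactly what ensures that no other explanation families appear and no unintended obstructions arise. It rests on the fact that a full context code $F_c$ is contained in $F_d$ only when $c=d$. The remaining verification is routine bookkeeping.
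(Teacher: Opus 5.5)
Your proposal is correct and follows essentially the paper's construction: the zero context $c_0$ plus characteristic contexts $c_e$, hypotheses $a,b$ with $\bot\leftarrow a,b$, exact-guard rules $g\leftarrow a,F_{c_0}$ and $g\leftarrow b,F_{c_e}$, minimal obstructions exactly the pairs $\{c_0,c_e\}$ with span $e$, and a single context with no explanatory rule for the contract $\{\varnothing\}$. The only cosmetic difference is your edgeless case: you use the empty-body rule $g\leftarrow$, whereas the paper keeps the same theory restricted to $C=\{c_0\}$. Both give an obstruction-free instance.
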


\begin{proof}[Proof sketch]
For nonempty edges use their characteristic contexts $c_e$ and the zero context $c_0$. With hypotheses $a,b$, constraint $\bot\leftarrow a,b$, and exact-guard rules $g\leftarrow a,F_{c_0}$ and $g\leftarrow b,F_{c_e}$, the minimal obstructions are exactly $\{c_0,c_e\}$, with spans $e$. No edges require only $c_0$; the empty-edge contract uses one context with no explanatory rule. These cases include $X=\varnothing$. The construction is acyclic and has $O((|\mathcal G|+1)(|X|+1))$ encoding size.
\end{proof}

Thus every upward-closed observation requirement is realizable. Obstruction cardinality and span size are distinct parameters.

\subsection{Context restriction and obstruction provenance}

Let $\mathcal P|_{C'}$ replace $C$ by $C'\subseteq C$, leaving all other data fixed.

\begin{proposition}[Context restriction]
Minimal obstructions under restricted context information satisfy
\end{proposition}

\[
\mathcal B(\mathcal P|_{C'})
=\{D\in\mathcal B(\mathcal P):D\subseteq C'\}.
\tag{8}
\]

Every scheme sufficient before restriction remains sufficient afterwards. However, the canonical contract alone does not determine the contract after an arbitrary specified restriction.

\begin{proof}[Proof sketch]
Obstructionhood and minimality depend only on member contexts, proving (8); restriction shrinks cells, preserving sufficiency. For nonrecoverability, take the full Boolean square on $p,q$ with unique singleton explanations determined respectively by $p\wedge q$ and $p\vee q$. Exact guards realize both acyclically. Both contracts are $\{\{p\},\{q\}\}$. On $C'=\{00,01\}$, the first becomes empty and the second becomes $\{\{q\}\}$.
\end{proof}

Updating under context restriction requires obstruction witnesses: larger spans discarded from the canonical contract may become relevant when smaller witnesses disappear.

\section{Computational Complexity and Representation}

\subsection{Decision problems and a Horn compilation lemma}

We use polynomial-time many-one reductions. Succinct inputs contain $X,H,T,g,\Gamma$. \textbf{Sufficiency} tests a supplied $O$; \textbf{Observation Selection} asks whether a sufficient $O$ has $|O|\le k$. \textbf{Cell Existence} tests whether $\mathcal E([c]_O)\ne\varnothing$ for supplied $c\in C$ and $O$; \textbf{Robust Verification} tests membership of a supplied $S$ in that intersection. Malformed contexts are rejected. Tabulated families are semantic inputs, without a separate Horn-realization check.

\begin{lemma}[Circuit realization]
For every Boolean circuit $R(X,Y)$ with disjoint input blocks $X,Y$, a polynomial-size acyclic Horn theory with rule bodies of size at most two can be constructed with exactly the hypothesis atoms $y_i^0,y_i^1$ and one manifestation $g$. At context $c$, its valid explanations are exactly the sets $S_s=\{y_i^{s(y_i)}:y_i\in Y\}$ for complete assignments $s$ satisfying $R(c,s)$.
\end{lemma}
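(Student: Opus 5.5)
The plan is a Tseitin-style encoding of $R$ as positive Horn derivations, with one dual-rail atom pair per wire, followed by integrity constraints that force a valid explanation to be a complete, consistent assignment to $Y$. I first put $R$ in negation normal form by pushing negations to the inputs with De Morgan's laws. This at most doubles the circuit: each gate $w$ gets two copies, one computing $w$ and one computing $\neg w$. The result is a monotone circuit over the literals $x^0,x^1,y^0,y^1$, which are exactly the source atoms and the hypothesis atoms. For each wire $w$ of the original circuit I introduce derived atoms $w^1$ and $w^0$, read as ``$w$ evaluates to $1$'' and ``$w$ evaluates to $0$''. I also assume fan-in two, after binarizing larger gates.

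The rules are as follows. Input wires are the atoms themselves. An AND gate $w=u\wedge v$ gets $w^1\leftarrow u^1,v^1$, together with $w^0\leftarrow u^0$ and $w^0\leftarrow v^0$. OR gates are dual, and a NOT gate swaps the rails with unit rules. I then add $g\leftarrow \mathrm{out}^1$. For the constraints I add $\bot\leftarrow y_i^0,y_i^1$ for each $i$. Completeness must also be forced. For this I build a chain of fresh derived atoms $t_0\leftarrow$ (empty body), $t_i\leftarrow t_{i-1},d_i$, where $d_i\leftarrow y_i^0$ and $d_i\leftarrow y_i^1$. I then replace the goal rule by $g\leftarrow m,t_n$, where $m$ is the output atom $\mathrm{out}^1$. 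All bodies have size at most two, and the dependency graph follows the circuit topology plus a chain, so the theory is acyclic and polynomial in size.

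Correctness comes from a monotonicity invariant, proved by induction along a topological order. Suppose $A=F_c\cup S$ contains, for each variable, exactly one of its two literals. Then for every wire, $w^b\in\operatorname{cl}_T(A)$ holds iff $w$ evaluates to $b$ under $(c,s)$. If $S$ is partial, meaning it omits both literals of some $y_i$, the closure contains only rail atoms that are true under every completion. That property is enough, because $t_n$ then fails and $g$ is not derived. If $S$ contains both $y_i^0$ and $y_i^1$, $\bot$ is derived. Hence $S$ is valid exactly when it equals some $S_s$, which gives $g$ via $t_n$ and the correct output rail, with $R(c,s)=1$ and no $\bot$. Note that $\bot$ arises only from clashing hypothesis pairs, since $F_c$ always has exactly one literal per source variable.

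The main obstacle is keeping the two rails sound. The concern is that some spurious derivation might yield $w^1$ and $w^0$ together, or might let a partial $S$ reach $g$. I would settle this with the induction above, which states for each wire: $w^b$ is derived iff $w$ evaluates to $b$ under every completion of the literals present. The statement requires exactly the atoms $y_i^0,y_i^1$ as hypotheses, and all auxiliary atoms are derived. One point needs a check: $t_0$ is derived from an empty body, which the paper permits.
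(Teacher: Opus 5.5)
Your construction is essentially the paper's: dual-rail rules for each gate, clash constraints $\bot\leftarrow y_i^0,y_i^1$, selector atoms $d_i\leftarrow y_i^b$ feeding a binary conjunction chain into $g$ together with the positive output rail, and correctness by induction along the circuit order. The differences are cosmetic. Your empty-body $t_0$ absorbs the paper's separate $Y=\varnothing$ case, the preliminary NNF pass is redundant because NOT already swaps rails, and constant gates (which the paper supplies as facts) deserve a one-line mention.
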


\begin{proof}[Proof sketch]
Use a dual-rail simulation of binary AND/OR, unary NOT, and constant gates. For $v=u\wedge w$, add $v^1\leftarrow u^1,w^1$, $v^0\leftarrow u^0$, and $v^0\leftarrow w^0$. For $v=u\lor w$, add $v^1\leftarrow u^1$, $v^1\leftarrow w^1$, and $v^0\leftarrow u^0,w^0$. NOT swaps rails; a constant supplies its value rail as a fact. Add $\bot\leftarrow y_i^0,y_i^1$ and selectors $s_i\leftarrow y_i^0$, $s_i\leftarrow y_i^1$. Derive $g$ from the positive output rail and all selectors using a binary conjunction chain; with $Y=\varnothing$, use just the output rail.

The selectors force at least one rail of every hypothesis bit, and the constraints permit at most one. Source facts and the chosen hypothesis assignment therefore supply exactly one rail per input. Induction over gates derives precisely each gate's correct rail, so $g$ follows exactly for accepting assignments. Conversely, each such assignment enables every selector and triggers no constraint. Auxiliary atoms are derived, never hypotheses. Circuit order followed by selectors, the chain, $g$, and sink $\bot$ gives acyclicity. Size is linear in the normalized circuit, with all bodies of size at most two.
\end{proof}

\subsection{Fixed observations}

\begin{theorem}[Succinct verification and sufficiency]
Robust Verification is coNP-complete, Cell Existence is $\Sigma_2^P$-complete, and Sufficiency is $\Pi_3^P$-complete. Restricting Sufficiency to $O=\varnothing$ yields $\Sigma_2^P$-completeness; restricting it to $O=X$ yields $\Pi_2^P$-completeness. Every lower bound holds for $\Gamma=\top$, acyclic Horn rules, and one manifestation.
\end{theorem}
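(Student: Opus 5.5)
Each upper bound comes from writing the quantifier structure of the problem with polynomial-time Horn closure as the matrix. Each lower bound comes from the Circuit realization lemma (Lemma 3) applied to a standard QBF-complete problem, with $\Gamma=\top$.

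\textbf{Upper bounds.} Robust Verification asks whether $V(d,S)$ holds for every $d$ with $\Gamma(d)=1$ and $d|_O=c|_O$. Deciding $V$ is a Horn closure, so this is a coNP property. Cell Existence asks whether there exists $S\subseteq H$ such that for all such $d$, $V(d,S)$ holds, which places it in $\Sigma_2^P$. Sufficiency, by the Cell criterion (Lemma 1), asks whether for every $c$ with $\Gamma(c)=1$ there exists $S$ such that for all $d\equiv_O c$, $V(d,S)$ holds; this is $\Pi_3^P$. For $O=\varnothing$ there is one cell, so the outer quantifier disappears and the problem is in $\Sigma_2^P$. For $O=X$ the cells are singletons, giving $\forall c\,\exists S\,V(c,S)$, which is in $\Pi_2^P$.

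\textbf{Lower bounds.} Take a QBF with circuit matrix and apply Lemma 3 to a circuit $R(X,Y)$.
\begin{itemize}
\item Robust Verification: reduce from the validity of $\forall X\,\phi(X)$. Let $Y=\varnothing$ and $R=\phi$. Then the empty set is valid at $c$ iff $\phi(c)$. Take $O=\varnothing$ and $S=\varnothing$.
\item Cell Existence, and Sufficiency with $O=\varnothing$: reduce from $\exists Y\,\forall Z\,\phi(Y,Z)$. The source variables are $Z$, the context set is all of $\{0,1\}^Z$, and $R(Z,Y)=\phi$. The valid sets are exactly the encodings $S_s$, and a common explanation of the single cell corresponds to a witness $s$.
\item Sufficiency with $O=X$: reduce from $\forall X\,\exists Y\,\phi$ in the same way, using singleton cells.
\item Full Sufficiency: reduce from $\forall U\,\exists Y\,\forall Z\,\phi(U,Y,Z)$. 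The sources are $X=U\cup Z$, $O=U$, and $R(U\cup Z,Y)=\phi$. Each $O$-cell fixes $u$ and ranges over all $z$. The cell intersection is nonempty iff some $s$ satisfies $\forall z\,\phi(u,s,z)$.
\end{itemize}
Lemma 3 gives acyclicity, a single manifestation $g$ and polynomial size, and every reduction uses $\Gamma=\top$. If the problems require $C\neq\varnothing$, this holds automatically because $\Gamma=\top$.

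\textbf{Main obstacle.} The correctness of each reduction depends on Lemma 3's exactness claim: every valid explanation is exactly some complete-assignment encoding $S_s$. Without this, a nonminimal or partial hypothesis set could be a spurious common explanation of a cell. So the key step is to check that no set other than the $S_s$ can be valid. The selectors force at least one rail per bit, the constraints forbid both rails, and any extra hypotheses would therefore trigger $\bot$. With that verified, the correspondence between cell intersections and quantifier witnesses is immediate.

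A smaller point concerns Cell Existence, whose input is a context $c\in C$ together with $O$. Under $\Gamma=\top$ any context is acceptable, so we supply an arbitrary $c$ and $O=\varnothing$.
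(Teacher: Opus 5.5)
Your proposal is correct and follows the paper's proof essentially step for step. Membership comes from the same quantifier prefixes with a polynomial Horn-closure matrix, and hardness comes from the same Lemma 3 reductions with $\Gamma=\top$: tautology, $\exists\forall$, $\forall\exists$ and $\forall\exists\forall$ QBFs, with the sources encoding the universal blocks and $O$ set to exactly the outermost universal block (empty for the $\exists\forall$ cases).
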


\begin{proof}[Proof sketch]
Robust Verification universally checks compatible contexts, giving coNP membership; realizing a hypothesis-free circuit and observing nothing reduces tautology. Cell Existence has prefix $\exists S\forall d$; Lemma 3 realizes $\exists Y\forall W\,R(Y,W)$ with no observation and representative $0^W$. Empty-observation Sufficiency has the same prefix and reduction.

For observed $Z$ and hidden $W$, Lemma 1 gives $\forall z\exists S\forall w\,(\neg\Gamma(z,w)\lor V((z,w),S))$. Realizing $\forall Z\exists Y\forall W\,R(Z,Y,W)$ proves $\Pi_3^P$-hardness. Full observation instead gives $\forall c\exists S\,(\neg\Gamma(c)\lor V(c,S))$, with hardness from $\forall X\exists Y\,R(X,Y)$. All reductions use $\Gamma=\top$.

For the upper bounds, the matrix is polynomial-time decidable by evaluating $\Gamma$ and computing Horn closure. Reports with no compatible context make every implication true and require no policy response. For the lower bounds, Lemma 3 forces explanations to encode complete existential assignments, while source facts encode universal assignments. Observing exactly $Z$ gives the explanatory assignment access to $z$ but fixes it across all values of $W$, matching the quantified formula.
\end{proof}

The additional quantifier block expresses dependence on observed variables and independence from hidden ones; refinement still preserves feasibility (Proposition 1).

\subsection{Choosing observations}

For selection hardness, the chosen coordinates must encode one commitment shared by all reports.

\begin{theorem}[Succinct observation selection]
Observation Selection is $\Sigma_4^P$-complete, even when $\Gamma=\top$, $T$ is acyclic, and all rule bodies have size at most two.
\end{theorem}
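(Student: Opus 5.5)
Membership comes first. An instance is a yes-instance exactly when $\exists O$ with $|O|\le k$ such that $O$ is sufficient. By Theorem 4 (via Lemma 1), sufficiency is the $\Pi_3^P$ predicate $\forall z\,\exists S\,\forall w\,(\neg\Gamma(z,w)\lor V((z,w),S))$, and its matrix is polynomial-time by Horn closure. Guessing $O$ and then checking this predicate gives a $\Sigma_4^P$ upper bound.

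For hardness I will reduce from $\exists U\,\forall Z\,\exists Y\,\forall W\,R(U,Z,Y,W)$ with $U=\{u_1,\dots,u_n\}$. The difficulty, as the remark before the statement says, is that the outer existential $U$ must be a single commitment shared by every report. Policies may vary per report, so the commitment has to live in the choice of $O$. I will therefore encode each $u_i$ by a \emph{pair} of telemetry source variables $p_i,q_i$, where "observe $p_i$" means $u_i=0$ and "observe $q_i$" means $u_i=1$. The $Z$ variables remain sources, and the $W$ variables become hidden sources. Take $\Gamma=\top$ and budget $k=n+|Z|$. Hypotheses are Lemma 3 dual-rail bits for:
\begin{itemize}
\item a guess vector $e\in\{0,1\}^n$ for $U$;
\item copy bits $r_1,\dots,r_n$;
\item copy bits $v_j$ for each $z_j\in Z$;
\item the original $Y$.
\end{itemize}
The circuit to realize with Lemma 3 is
\[
R'=R(e,Z,Y,W)\wedge\bigwedge_i\bigl(r_i=(e_i\,?\,q_i:p_i)\bigr)\wedge\bigwedge_j(v_j=z_j).
\]
This yields an acyclic theory with bodies of size at most two and $\Gamma=\top$.

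\textbf{Key steps.} (i) Every $z_j$ must be observed. If $z_j$ is hidden, some cell contains both values of $z_j$, and no fixed $v_j$ works for both. (ii) At least one of $p_i,q_i$ must be observed. Otherwise both are hidden, and a cell contains contexts with every value of $(p_i,q_i)$; for either value of $e_i$, the fixed bit $r_i$ must equal a hidden variable that varies across the cell. (iii) Steps (i) and (ii) together with the budget imply that $O$ contains all of $Z$ and exactly one variable from each pair.

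(iv) This is the step I expect to be the main obstacle: showing that $e$ is forced globally. Suppose $p_i\in O$ and $q_i\notin O$. Then in every cell, $e_i=1$ would require $r_i$ to equal the hidden, varying $q_i$, which is impossible. Hence $e_i=0$ in every cell, and symmetrically $e_i=1$ in every cell if $q_i\in O$. So $e$ equals the assignment $U_O$ read off from which member of each pair is observed, uniformly over all reports. I must also check that the unobserved pair member and the extra hidden dimensions do not let the policy escape. The point is that within a cell, $q_i$ ranges over both values independently of everything else, because $\Gamma=\top$.

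(v) With $e=U_O$ fixed and the $r,v$ bits determined by observed values, the remaining freedom per cell is exactly $Y$ depending on $z$ and uniform over $W$. Sufficiency of $O$ is then equivalent to $\forall Z\,\exists Y\,\forall W\,R(U_O,Z,Y,W)$.

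\textbf{Both directions.} Conversely, given a witness $U$, observe $Z$ together with $p_i$ or $q_i$ according to $u_i$. The policy sets $e=U$, copies the observed values into $r$ and $v$, and picks $Y$ from the witness function of $z$. This $O$ has size $n+|Z|=k$. Hence the quantified formula is true exactly when a sufficient $O$ with $|O|\le k$ exists. The construction is polynomial, which gives $\Sigma_4^P$-hardness under the stated restrictions.
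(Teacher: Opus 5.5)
Your proposal is correct and is essentially the paper's proof. It uses the same $\exists\forall\exists\forall$ source formula, the same source pair $p_i,q_i$ per outer existential bit, and the same Lemma~3 realization of the matrix conjoined with the multiplexed copy bits (the paper's $t_i$, your $r_i$) and the copies of the observed universal bits. It also uses the same budget $|A|+|B|$ and the same forcing argument, in which observing only $p_i$ (resp.\ $q_i$) pins that bit to $0$ (resp.\ $1$) in every cell. Your membership argument (guess $O$, then decide the $\Pi_3^P$ sufficiency predicate) is equivalent to the paper's $\exists o\,\forall c\,\exists S\,\forall d$ prefix. One small slip: the sufficiency result you cite is Theorem~3 in the paper's numbering; Theorem~4 is the statement being proved.
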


\begin{proof}[Proof sketch]
Selection has prefix $\exists o\forall c\exists S\forall d$ and polynomial matrix $|o|\le k\land((\Gamma(c)\land\Gamma(d)\land c\equiv_o d)\rightarrow V(d,S))$.

Reduce $\exists A\forall B\exists Y\forall D\,\varphi(A,B,Y,D)$. For each $a_i$, introduce source pair $p_i,q_i$. All $B,D,P,Q$ vary freely. Hypothesis assignments encode $A,Y$, bits $t_i$, and $u_j$ for each $b_j$, with validity predicate:
\[
\begin{aligned}
R={}&\varphi(A,B,Y,D)\\
&{}\land\bigwedge_i[(\neg a_i\rightarrow t_i=p_i)\land(a_i\rightarrow t_i=q_i)]\\
&{}\land\bigwedge_j(u_j=b_j).
\end{aligned}
\tag{12}
\]
Apply Lemma 3 with $k=|A|+|B|$. Sufficiency forces observing each $b_j$: otherwise no fixed $u_j$ tracks it within a cell. It also forces observing at least one of each $p_i,q_i$, since either value of $a_i$ would otherwise leave $t_i$ tracking a hidden bit. The budget permits exactly these choices. Observing only $p_i$ forces $a_i=0$ in every cell; observing only $q_i$ forces $a_i=1$. Thus $O$ fixes one global assignment $A^*$. Fix any report on the chosen pair coordinates. For every $B$, its cell explanation supplies $Y$ satisfying $\varphi(A^*,B,Y,D)$ for all freely varying $D$. Conversely, a witnessing $A^*$ selects the corresponding coordinates; for each report choose the witnessing $Y(B)$, copy observed values into $t_i,u_j$, and obtain a common explanation. This proves equivalence and the claimed restrictions.
\end{proof}

\subsection{Explicit representations and policy size}

\begin{theorem}[Representation boundaries]
If $C$ is explicitly listed but explanation families remain Horn-defined, Sufficiency and Observation Selection are NP-complete. If every $\mathcal E_c$ is explicitly tabulated as well, Sufficiency is polynomial-time decidable, whereas Observation Selection is NP-complete even with two singleton explanations available across the entire table.
\end{theorem}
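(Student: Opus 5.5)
The plan has four parts, matching the four claims of Theorem 5.

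\textbf{Listed contexts: NP membership.} For Sufficiency with $C$ listed, Lemma 1 reduces the question to one test per cell. For each $c\in C$ I compute the cell $[c]_O$ directly from the list. Each cell then asks for some $S\subseteq H$ with $V(d,S)$ for all $d\in B$. A certificate consists of one $S_B$ per cell, at most $|C|$ sets of polynomial size. Checking it takes $|C|$ Horn closures per cell, each polynomial time by the closure bound already cited. Observation Selection additionally guesses $O$ with $|O|\le k$ together with these sets, so it is also in NP.

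\textbf{Listed contexts: NP-hardness.} For Sufficiency I will use a single cell, $O=\varnothing$, and reduce SAT to Cell Existence with two listed contexts. I realize a CNF $\varphi(Y)$ via Lemma 3 with $X=\varnothing$ and one context; that single context is enough here. The one-context cell is nonempty exactly when $\varphi$ is satisfiable.

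For Observation Selection, hardness is inherited from the tabulated case below. A tabulated instance with singleton families can be realized by exact-guard rules $g\leftarrow h,F_c$ plus pairwise exclusivity constraints, as in Proposition 4. This makes each $\mathcal E_c$ contain exactly its tabulated singletons, in polynomial size when $C$ is listed.

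\textbf{Tabulated families: polynomial Sufficiency.} Here I partition $C$ by $O$ and intersect the listed families within each cell. Intersecting explicit lists is polynomial in the table size, and Lemma 1 then decides the instance.

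\textbf{Tabulated families: NP-hard Selection.} This step carries the main difficulty, because the table may use only two singleton labels $\{a\}$ and $\{b\}$ overall. If every family is a singleton, Proposition 3 gives that the contract is the discernibility hypergraph of pairs with different labels. I therefore need an explicit two-label table whose minimum discernibility reduct encodes Vertex Cover or Hitting Set.

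My construction uses a Hitting Set instance with universe $X$ and sets $E_1,\dots,E_m$. The contexts are the zero context labelled $\{a\}$ and the characteristic contexts $c_{E_j}$ labelled $\{b\}$, exactly the realization of Theorem 4. The discerning pairs are $\{c_0,c_{E_j}\}$, with spans $E_j$. The remaining pairs all carry label $b$ and add no edges.

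By Theorem 1, sufficient $O$ of size at most $k$ then correspond exactly to hitting sets of size at most $k$. Only $1+m$ contexts are listed, so the reduction is polynomial. The point to verify is that no edges arise beyond those from $c_0$, and that holds because same-label pairs are never obstructions.
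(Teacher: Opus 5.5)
Your proposal is correct and follows the paper's proof in all four parts. The NP membership is the same: guess one explanation per cell and, for Selection, also $O$. Sufficiency hardness is the same SAT reduction through Lemma~3 with $X=\varnothing$ and a single listed context. Tabulated Sufficiency is the same cellwise intersection check. Tabulated Selection hardness is the same Hitting Set reduction through the zero-versus-characteristic-context construction; the paper numbers that realization result Theorem~2, not Theorem~4.

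The one real divergence is Selection hardness for listed contexts with Horn-defined families. The paper reuses the SAT reduction: with $X=\varnothing$ the only scheme is $\varnothing$, so Selection with any budget coincides with Sufficiency. You instead Horn-realize the two-label Hitting Set table, using exact guards $g\leftarrow h,F_c$ and pairwise exclusivity. This is valid because guards of distinct contexts never fire in one another. The paper's route is shorter and covers both tasks at once, but it shows hardness only where there is nothing to select. Your route shows the more informative fact that the listed Horn case stays hard even when every family is a single singleton. There the difficulty comes from choosing observations, not from finding explanations.

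Two small tidying points. First, your SAT step says ``two listed contexts'' but the construction, correctly, uses one. Second, a Hitting Set instance containing an empty set would make $c_{E_j}=c_0$ carry both labels. Such instances are trivially negative, so map them to a fixed no-instance first.
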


\begin{proof}[Proof sketch]
For listed contexts, guess a hypothesis set per cell and verify all its contexts; additionally guessing $O$ preserves NP. With $X=\varnothing$ and one context, Lemma 3 reduces SAT to either task. Tabulated families permit polynomial cell-intersection checks. Selection remains NP-hard by Hitting Set \cite{miller_reducibility_1972}, using Theorem 2's two-explanation construction.
\end{proof}

Contexts, explanations, and contracts can each require exponential space; the tabulated bound excludes their construction. A small obstruction rank bounds how many contexts a witness needs, but finding those contexts and their shared explanations can remain costly under a succinct representation. The contract is an exact semantic compilation rather than an assumed polynomial-size input to the succinct algorithms.

\begin{proposition}[Many necessary policy outputs]
There are polynomial-size acyclic Horn instances with $n$ source variables for which full observation is sufficient and every valid policy uses $2^n$ distinct explanations.
\end{proposition}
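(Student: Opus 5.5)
The plan is to construct, for each $n$, an instance in which every context has exactly one valid explanation, and those explanations are pairwise distinct. Then any valid policy is forced to output $2^n$ distinct sets. Take $X=\{x_1,\ldots,x_n\}$ and $C=\{0,1\}^X$, i.e.\ $\Gamma=\top$. Take hypotheses $H=\{y_i^0,y_i^1:1\le i\le n\}$. Apply Lemma 3 (circuit realization) to the circuit $R(X,Y)=\bigwedge_i (y_i\leftrightarrow x_i)$, which has size $O(n)$. This gives a polynomial-size acyclic Horn theory with binary bodies and manifestation $g$. By Lemma 3, at context $c$ the valid explanations are exactly the sets $S_s$ for complete assignments $s$ with $R(c,s)$. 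Since $R$ forces $s=c$, we get $\mathcal E_c=\{S_c\}$ with $S_c=\{y_i^{c(x_i)}\}$.

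A more direct alternative avoids Lemma 3. Use the constraints $\bot\leftarrow y_i^0,y_i^1$ and the matching rules $m_i\leftarrow x_i^b,y_i^b$ for $b\in\{0,1\}$. Then derive $g$ from $m_1,\ldots,m_n$ through a binary conjunction chain. The check is the following. If $g$ is derived, every $m_i$ holds, so $y_i^{c(x_i)}\in S$. The constraints then exclude $y_i^{1-c(x_i)}$, since $S$ cannot contain both rails. Hence $S=S_c$. Conversely, $S_c$ derives $g$ and triggers no constraint. Either route gives acyclicity and size $O(n)$. I would present the direct one and mention that it is an instance of Lemma 3.

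Next come the two claims. Under full observation the cells are singletons, and every $\mathcal E_c$ is nonempty. So $X$ is sufficient by Proposition 1, or directly by Lemma 1. Any valid policy $\pi$ for $X$ must satisfy $\pi(c)\in\mathcal E_c=\{S_c\}$. The map $c\mapsto S_c$ is injective, since $c$ is read off from which rail of each bit appears. So $\pi$ takes $2^n$ distinct values. The statement says "every valid policy" without fixing the scheme, so I would also note the following. Any sufficient scheme $O$ must have singleton cells, because two distinct contexts already form an obstruction with span at least one variable. By Theorem 1 and Proposition 2, the contract is the discernibility hypergraph, whose edges are all the singletons $\{x_i\}$. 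So $O=X$ is forced and the count applies to every valid policy of every sufficient scheme.

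The only step needing care is the uniqueness of explanations, i.e.\ ruling out nonminimal supersets of $S_c$. The exclusivity constraints handle this. The other checks (acyclicity, the size bound, and the $n=0$ edge case with a single context and explanation $\varnothing$) are routine.
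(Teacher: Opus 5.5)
Your proposal is correct and follows the paper's proof: the paper likewise applies Lemma~3 to $\bigwedge_i(y_i=x_i)$ over all contexts, so each context admits only its matching complete assignment, which forces $2^n$ distinct outputs. Your hand-built matching-rule theory is a valid specialization of that construction, and your remark that the contract consists of all singletons $\{x_i\}$ is a correct strengthening: it makes $X$ the only sufficient scheme, although the statement does not need it.
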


\begin{proof}
Realize $\bigwedge_i(y_i=x_i)$ by Lemma 3 over all contexts. Each admits only its matching complete assignment, forcing $2^n$ outputs from a linear-size theory.
\end{proof}

Distinct outputs need not require a large policy circuit: the matching policy is linear-size. The output bound instead explains why a full policy table is unsuitable as a polynomial certificate in the succinct setting. Validating a supplied policy circuit is coNP-complete: universally evaluate its hypothesis output and Horn closure at each possible context. The no-hypothesis tautology reduction gives hardness even for a constant policy.

Table~\ref{tab:complexity} summarizes the representation boundaries.

\begin{table*}[t]
\centering\small
\begin{tabular}{p{1.55in}p{2.20in}p{2.30in}}
\toprule
Representation & Sufficiency of a supplied scheme & Selection within a cardinality budget \\
\midrule
Circuit-defined contexts, Horn-defined explanations & $\Pi_3^P$-complete & $\Sigma_4^P$-complete \\
Listed contexts, Horn-defined explanations & NP-complete & NP-complete \\
Listed contexts and explanation families & P & NP-complete \\
\bottomrule
\end{tabular}
\caption{Complexity of sufficiency verification and observation selection across representations.}
\label{tab:complexity}
\end{table*}

\section{Structural Fragments, Composition, and Scope}

\subsection{When pairwise compatibility becomes sufficient}

The obstruction rank is $0$ if no obstruction exists, and otherwise $\max_{D\in\mathcal B(\mathcal P)}|D|$. With nonempty individual families, rank at most two means every pairwise-intersecting collection has a common explanation.

\begin{definition}[Rectangular explanation families]
Families are rectangular when, for some $t\ge1$, there are nonempty finite domains $L_1,\ldots,L_t$ and a fixed injective encoding $\iota:L_1\times\cdots\times L_t\rightarrow\mathcal U$ such that, for every $c$,
\end{definition}

\[
\mathcal E_c=\{\iota(\ell_1,\ldots,\ell_t):
\ell_i\in A_i(c)\text{ for all }i\},
\qquad A_i(c)\subseteq L_i.
\tag{13}
\]

\begin{lemma}[Order convexity and one-hot realization]
Each $\mathcal E_c$ is order-convex: if $S_1\subseteq S_2\subseteq S_3$ and $S_1,S_3\in\mathcal E_c$, then $S_2\in\mathcal E_c$. Every finite table of slot permissions $A_i(c)$ is Horn-realizable under the one-hot encoding $\iota(\ell_1,\ldots,\ell_t)=\{h_{i,\ell_i}:1\le i\le t\}$, with one distinct hypothesis for each slot-value pair.
\end{lemma}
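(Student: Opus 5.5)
The order-convexity claim concerns arbitrary Horn instances; it does not depend on rectangularity. I would prove it directly from (1) using monotonicity of least Horn closure. Let $S_1\subseteq S_2\subseteq S_3$ with $S_1,S_3\in\mathcal E_c$. Closure is monotone in its premises, so
\[
\operatorname{cl}_T(F_c\cup S_1)\subseteq\operatorname{cl}_T(F_c\cup S_2)\subseteq\operatorname{cl}_T(F_c\cup S_3).
\]
Since $g$ lies in the left closure, it lies in the middle one. Since $\bot$ is absent from the right closure, it is absent from the middle one. Hence $V(c,S_2)$ holds, and $S_2\in\mathcal E_c$. This step is routine. Its role is to note that entailment is upward closed and consistency is downward closed, so every family is an intersection of an up-set and a down-set.

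For the realization claim, fix a finite context set $C$ and a table of permissions $A_i(c)\subseteq L_i$. I would use hypotheses $h_{i,\ell}$ for each slot $i$ and each $\ell\in L_i$, and build the theory mirroring Lemma 3's selector construction. The rules are as follows.
\begin{itemize}
\item Exclusivity constraints $\bot\leftarrow h_{i,\ell},h_{i,\ell'}$ for distinct $\ell,\ell'\in L_i$.
\item Context-guarded permission rules $s_i\leftarrow h_{i,\ell},F_c$ for every $c\in C$ and every $\ell\in A_i(c)$. Here $F_c$ is the conjunction of the source atoms of $c$; a binary chain can be used if bounded width is wanted.
\item A goal rule $g\leftarrow s_1,\ldots,s_t$.
\end{itemize}
The guards only need to be exact on the contexts in $C$. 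Because each source variable supplies exactly one of its two atoms, $F_c\subseteq F_d$ forces $c=d$. All heads are $s_i$, $g$ or $\bot$, and all bodies use only source, hypothesis and $s_i$ atoms, so the theory is acyclic.

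Next I would verify $\mathcal E_c=\{\iota(\ell_1,\ldots,\ell_t):\ell_i\in A_i(c)\}$ at each context $c$. For inclusion from right to left, take a one-hot set with $\ell_i\in A_i(c)$. It triggers every $s_i$, hence $g$. It contains at most one hypothesis per slot, so no constraint fires. For inclusion from left to right, take any valid $S$. Deriving $g$ requires every $s_i$. Each $s_i$ is derived only through some rule whose guard $F_d$ is contained in $F_c$, which forces $d=c$. So $S$ contains some $h_{i,\ell}$ with $\ell\in A_i(c)$. Consistency then forbids a second hypothesis from slot $i$. Therefore $S$ is exactly the one-hot image of a tuple in $\prod_i A_i(c)$. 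Injectivity of $\iota$ is immediate from the distinct hypotheses.

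The main point of care is the left-to-right inclusion. Validity permits nonminimal sets (Example 2), so I must rule out extra hypotheses. Every hypothesis belongs to some slot, so an extra one would either duplicate a slot and trigger $\bot$, or coincide with the already chosen value. Hence no spurious supersets survive. If some $A_i(c)$ is empty, no rule derives $s_i$ at $c$, so $\mathcal E_c=\varnothing$, which matches (13) since the product is then empty.
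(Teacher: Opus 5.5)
Your proposal is correct and follows essentially the same route as the paper. You use monotonicity of closure for order convexity, and for realization you use per-slot exclusivity constraints, exact context guards gating slot-permission atoms, and a goal rule requiring every slot. The only difference is cosmetic: the paper factors the guard through an intermediate atom $v_c\leftarrow F_c$, so that one guard conjunction is shared by all permission rules of that context, while you inline $F_c$ into each permission rule.
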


\begin{proof}[Proof sketch]
Monotonicity preserves entailment from $S_1$ to $S_2$ and consistency from $S_3$ to $S_2$. For realization, prohibit two values per slot. Exact guards $v_c\leftarrow F_c$ enable $w_i\leftarrow v_c,h_{i,\ell}$ precisely for $\ell\in A_i(c)$; require all $w_i$ for $g$. Thus explanations select exactly one permitted value per slot. The construction is acyclic, polynomial in the table, and permits binary conjunction chains.
\end{proof}

Rectangularity concerns actual families: the abstract family $\{\varnothing,\{a,b\}\}$ cannot be Horn-realized because order convexity also requires $\{a\},\{b\}$. Acyclicity alone does not imply rectangularity.

\begin{theorem}[Tight slot bound]
For rectangular explanation families, every minimal obstruction has size at most $\max_i|L_i|$. This bound is tight. In particular, for Boolean slots and nonempty individual families, sufficiency is equivalent to separating every pair of contexts with disjoint explanation families.
\end{theorem}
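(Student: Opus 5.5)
Since $\iota$ is injective, I would first reduce intersections of explanation families to slotwise intersections. For any nonempty $D\subseteq C$,
\[
\bigcap_{c\in D}\mathcal E_c=\iota\Bigl(\prod_{i=1}^t\bigcap_{c\in D}A_i(c)\Bigr).
\]
The inclusion $\supseteq$ is immediate from (13). For $\subseteq$, a common member equals $\iota(\ell)$ for a unique tuple $\ell$, and membership in each $\mathcal E_c$ forces $\ell_i\in A_i(c)$ for every $i$. Consequently $D$ is an obstruction exactly when some slot $i$ has $\bigcap_{c\in D}A_i(c)=\varnothing$. This turns the problem into a Helly-type statement about finitely many subsets of a single domain $L_i$.

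For the upper bound, let $D$ be a minimal obstruction. If $|D|=1$ there is nothing to prove, since $\max_i|L_i|\ge1$. Otherwise I fix a slot $i$ with empty intersection $\bigcap_{c\in D}A_i(c)=\varnothing$. Minimality says that for each $c\in D$ the set $D\setminus\{c\}$ has a common explanation. By the product formula this gives a value $\ell^{(c)}\in\bigcap_{d\ne c}A_i(d)$, and $\ell^{(c)}\notin A_i(c)$ because the full slot intersection is empty. If $\ell^{(c)}=\ell^{(d)}$ for some $c\ne d$, that value would lie in every $A_i(\cdot)$ over $D$, which is a contradiction. So $c\mapsto\ell^{(c)}$ is an injection $D\to L_i$, and $|D|\le|L_i|\le\max_j|L_j|$. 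This repeats the argument of Lemma 2, but inside one slot rather than in all of $\mathcal U$. The one point I must check carefully is that a single slot $i$ carries the emptiness for the whole of $D$, while minimality is applied through the product formula to the same slot.

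For tightness, I take $t=1$ and $L_1=\{1,\dots,r\}$, with contexts $c_1,\dots,c_r$ and $A_1(c_j)=L_1\setminus\{j\}$. By Lemma 4 (one-hot realization) this table is Horn-realizable. It is exactly the instance of Proposition 4, whose full collection is a minimal obstruction of size $r=\max_i|L_i|$. The Boolean statement needs $|L_1|=2$, and for $r=2$ this gives $A_1(c_1)=\{2\}$, $A_1(c_2)=\{1\}$.

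For the Boolean corollary, I assume $|L_i|\le2$ and nonempty families. Then every minimal obstruction has size at most two, and singletons are excluded because each $\mathcal E_c$ is nonempty. So the minimal obstructions are exactly the pairs $\{c,d\}$ with $\mathcal E_c\cap\mathcal E_d=\varnothing$. Such a pair is automatically minimal, since its proper subsets are singletons with nonempty families. By Theorem 1, $O$ is sufficient iff it meets $\operatorname{span}(\{c,d\})$ for every such pair, that is, iff $c|_O\ne d|_O$. Removing the superset edges in (5) does not affect transversality, so the equivalence with pairwise separation holds. I expect the only real obstacle to be the careful use of the injectivity of $\iota$ in the product formula; the remaining steps are routine.
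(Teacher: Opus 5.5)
Your proof is correct and follows the paper's route: injectivity of $\iota$ gives the slotwise product formula, a minimal obstruction is controlled by a single emptied slot, tightness comes from the one-slot table realized by Lemma 4 (the instance of Proposition 4), and Theorem 1 handles the Boolean case. The only difference is in the counting step. You inject $D$ into $L_i$ through the minimality witnesses $\ell^{(c)}$, which is Lemma 2 applied within one slot. The paper instead picks, for each value in $L_i$, a context of $D$ that excludes it, and uses minimality to conclude that these contexts already make up all of $D$.
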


\begin{proof}[Proof sketch]
Injectivity makes a product intersection nonempty exactly when every slot intersection is nonempty. A minimal obstruction therefore empties some slot. Choosing one context that excludes each value witnesses emptiness with at most $|L_i|$ contexts; minimality forces these to be all of $D$. One slot with $r$ contexts, each excluding its own value, attains rank $r$ under Lemma 4, including the empty singleton family for $r=1$. For Boolean slots with nonempty families, only pair obstructions remain.

The witness argument uses a single failing slot even if several slots fail: a proper subset emptying that slot would already be an obstruction. For tightness, all additional slots may permit their full domains, so they neither introduce smaller witnesses nor change the shared-choice condition. Theorem 1 then translates the Boolean case into the stated pair-separation criterion.
\end{proof}

For tabulated families of fixed rank $r$, subsets of at most $r$ contexts suffice to compile all minimal obstructions in polynomial time. Minimum observation selection remains NP-hard even at rank two by Theorem 2.

The distinction is between finding the requirements and choosing variables that meet them. Even when every requirement comes from a pair of contexts, its span can contain many variables. The resulting edges can encode an arbitrary Hitting Set instance, so pairwise compatibility simplifies witness construction without removing the optimization problem.

Without integrity constraints, explanation families are upward closed. If each is nonempty, all contain $H$, so no observation is needed; otherwise no scheme suffices. Example 2 shows why imposing local minimality changes this conclusion.

\subsection{Independent components share observation requirements}

Let $\mathcal P_1,\ldots,\mathcal P_m$, $m\ge1$, share $X,C$ and have pairwise disjoint hypothesis and derived vocabularies. Each rule body uses only its component's atoms and shared sources. Their combination $\mathcal P_\otimes$ has $H_\otimes=\bigcup_i H_i$ and $T_\otimes=\bigcup_i T_i\cup\{g_\otimes\leftarrow g_1,\ldots,g_m\}$, with fresh derived $g_\otimes$, shared sink $\bot$, and no additional rules or constraints.

\begin{theorem}[Contract composition]
Under these independence assumptions, $O$ is sufficient for $\mathcal P_\otimes$ exactly when it is sufficient for every $\mathcal P_i$. Moreover,
\end{theorem}

\[
\mathcal H(\mathcal P_\otimes)
=\min_{\subseteq}\bigcup_i\mathcal H(\mathcal P_i).
\tag{14}
\]

\begin{proof}[Proof sketch]
Validity factors into component restrictions of the joint hypothesis set. On each cell, common component explanations combine exactly when each exists, since vocabularies are disjoint and constraints internal. Sufficiency is therefore conjunction of component conditions. Theorem 1 and Proposition 2 yield the irredundant union (14).

In the forward direction, restricting a common joint explanation to $H_i$ supplies the component's common explanation. In the reverse direction, choose one common explanation per component and take their union. Internal rules preserve each component's derivation, and internal constraints cannot become violated through hypotheses from another component. Thus the same cellwise choices establish both directions of the equivalence.
\end{proof}

Extend the triage example by a freely varying source $r$. An independent component with permitted singleton labels $\{1,2\},\{1,3\},\{2,4\},\{3,4\}$ at $(q,r)=00,01,10,11$ has contract $\{\{q,r\}\}$, from its two incompatible diagonals. The joint contract has edges $\{p,q\},\{q,r\}$ and minimal schemes $\{q\},\{p,r\}$: sharing telemetry can lower observation cost. Cross-component constraints can invalidate (14) by making separately valid explanations jointly inconsistent.

\subsection{Interpretation and limits}

The framework supports telemetry selection, disclosure verification, and modular observation design. Restricting available variables, assigning nonnegative costs, and conjoining independent requirements instantiate the same contract. Its guarantee covers every context in the supplied space, under simultaneous exact Boolean observations. Additional background information can remove obstructions through context restriction; incorrectly excluding a possible context can invalidate the intended guarantee. Hypotheses are permissible explanations, without probabilistic or causal guarantees. Sequential sensing, noisy reports, richer entailment, and preferences applied before uniform choice require separate analyses.

\section{Conclusion}

Minimal empty-intersection families characterize when partial observations permit a uniform abductive commitment. Their spans form canonical observation contracts, linking higher-order compatibility to hypergraph transversals. Tight slot bounds and independent composition identify structured cases. The $\Pi_3^P$ and $\Sigma_4^P$ completeness results separate information dependencies from polynomial Horn inference, while explicit representations expose lower complexity boundaries. The paired-sensor reduction explains how choosing an information channel can enforce a global commitment despite report-dependent explanatory choices. Sequential observation and uniform preference semantics are natural extensions.

\bibliographystyle{named}
\bibliography{References}
\end{document}